\newtheorem{theorem}{Theorem}
\newtheorem{corollary}[theorem]{Corollary}
\newtheorem{lemma}[theorem]{Lemma}
\newcommand{\RR}{\ensuremath{\mathbb R}}  
\newcommand{\U}{\ensuremath{\mathcal{U}}}  
\newcommand{\scalar}[1]{\ensuremath{\langle #1 \rangle}} 
\DeclareMathOperator{\opt}{opt}
\DeclareMathOperator{\area}{area}
\DeclareMathOperator{\dwidth}{dwidth}
\DeclareMathOperator{\peri}{peri}
\newcommand\eps{\varepsilon}
\newcommand{\paral}{\lozenge}
\newcommand{\Ropt}{R_{\opt}}
\def\DEF#1{\textbf{\emph{#1}}}
\newenvironment{di}{\list{$\bullet$}{\itemsep0pt\parsep0pt}}{\endlist}
\let\leq\leqslant
\let\geq\geqslant
\let\le\leqslant
\let\ge\geqslant
\renewcommand{\showkeyslabelformat}[1]{\normalfont\tiny\ttfamily#1}
\def\section{\@startsection {section}{1}{\z@}{-3.5ex plus -1ex minus
    -.2ex}{2.3ex plus .2ex}{\large\bf}}
\def\subsection{\@startsection{subsection}{2}{\z@}{-3.25ex plus -1ex
    minus -.2ex}{1.5ex plus .2ex}{\normalsize\bf}}
\def\@fnsymbol#1{\ensuremath{\ifcase#1\or *\or 1\or 2\or 3\or 4\or
    5\or 6\or 7 \or 8\ or 9 \or 10\or 11 \else\@ctrerr\fi}}
\newcommand{\out}[1]{}
\title{Finding Largest Rectangles in Convex Polygons%
	\thanks{Supported by the Slovenian Research Agency, program P1-0297, projects J1-4106 and L7-5459; by the ESF EuroGIGA project (project GReGAS) of the European Science Foundation; and by NRF
    grant~2011-0030044 (SRC-GAIA), funded by the government of Korea.}}
\author{Sergio Cabello%
	\thanks{Department of Mathematics, IMFM, and
                Department of Mathematics, FMF, University of Ljubljana, Slovenia.
                Part of the work was done while visiting KAIST, Korea.}
  \and Otfried Cheong%
  	\thanks{Department of Computer Science, KAIST, Daejeon, Korea.}
  \and Christian Knauer%
  	\thanks{Institut f\"ur Informatik, Universit\"at Bayreuth, Bayreuth, Germany}
  \and Lena Schlipf%
  	\thanks{Institute of Computer Science, Freie Universit\"at Berlin,
    Germany.}
}
\begin{document}
\maketitle

\begin{abstract}
  We consider the following geometric optimization problem: find a
  maximum-area rectangle and a maximum-perimeter rectangle contained
  in a given convex polygon with $n$ vertices.  We give exact
  algorithms that solve these problems in time $O(n^3)$.  We also give
  $(1-\eps)$-approximation algorithms that take time $O(\eps^{-3/2}+
  \eps^{-1/2} \log n)$.
	
    \medskip
    \textbf{Keywords:} geometric optimization; approximation algorithm;
    convex polygon; inscribed rectangle.
\end{abstract}

\section{Introduction}

Computing a largest rectangle contained in a polygon (with respect to
some appropriate measure) is a well studied problem.  Previous results
include computing largest \emph{axis-aligned} rectangles, either in
convex polygons~\cite{ahs-cliir-95} or simple polygons (possibly with
holes) \cite{DMR-1997}, and computing largest \emph{fat} rectangles in
simple polygons \cite{hkms-06}.

Here we study the problem of finding a maximum-area rectangle and a
maximum-perimeter rectangle contained in a given convex polygon with
$n$ vertices.  We give exact $O(n^3)$-time algorithms and
$(1-\eps)$-approximation algorithms that take time $O(\eps^{-3/2}+
\eps^{-1/2} \log n)$. (For maximizing the perimeter we allow the
degenerate solution consisting of a single segment whose perimeter is
twice its length.)  To the best of our knowledge, apart from a
straightforward $\mathcal O(n^4)$-time algorithm, there is no other
exact algorithm known so far.

Our approximation algorithm to maximize the area improves the previous
results by Knauer et al.~\cite{ksst-12}, who give a deterministic
$(1-\eps)$-approximation algorithm with running time $O(\eps^{-2}\log
n)$ and a Monte Carlo $(1-\eps)$-approximation algorithm with running
time $O(\eps^{-1}\log n)$. We are not aware of previous
$(1-\eps)$-approximation algorithms to maximize the perimeter.

\section{Preliminaries}

\paragraph{Notation.}
We use $C$ for arbitrary convex bodies and $P$ for convex polygons.

Let $\U$ be the set of unit vectors in the plane.
For each $u\in \U$ and each convex body $C$, the \emph{directional width}
of $C$ in direction $u$, denoted by $\dwidth (u,C)$,
is the length of the orthogonal projection of $C$ 
onto any line parallel to $u$. Thus
\[
	\dwidth (u,C) ~=~ \max_{p\in C} \scalar{p,u} - \min_{p\in C} \scalar{p,u},
\]
where $\scalar{\cdot,\cdot}$ denotes the scalar product.

For a convex body $C$ and a parameter $\eps\in (0,1)$, 
an \emph{$\eps$-kernel} for $C$ is a convex body $C_\eps\subseteq C$ such that
\[
	\forall u\in \U: ~~(1-\eps)\cdot \dwidth (u,C) \le \dwidth (u,C_\eps).
\]
The diameter of $C$ is the distance between the two furthest points
of $C$. It is easy to see that it equals
\[
	\max_{u\in \U} \dwidth (u,C).
\]

Ahn et al.~\cite{abcnsv-06} show how to compute an $\eps$-kernel.
Their algorithm uses the following type of 
primitive operations for $C$:
\begin{di}
	\item given a direction $u\in \U$, find an extremal point of $C$
		in the direction $u$; 
	\item given a line $\ell$, find $C\cap \ell$.
\end{di}
Let $T_C$ be the time needed to perform each of those primitive operations.
We will use $T_C$ as a parameter in some of our running times.
When $C$ is a convex $n$-gon whose boundary is given 
as a sorted array of vertices or as a binary search tree,
we have $T_C=O(\log n)$~\cite{Chazelle-Dobkin,Reichling}.
Ahn et al.~show the following result.
\begin{lemma}[Ahn et al.~\cite{abcnsv-06}]
\label{le:kernel}
	Given a convex body $C$ and a parameter $\eps\in (0,1)$,
	we can compute in $O(\eps^{-1/2} T_C)$ time
	an $\eps$-kernel of $C$ with $O(\eps^{-1/2})$ vertices.
\end{lemma}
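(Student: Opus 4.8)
The plan is to produce the kernel as an inscribed polygon obtained by an adaptive refinement, after first normalizing $C$ to be ``fat'' by an affine map. There are three ingredients: (i)~reduce to the case $B(0,1)\subseteq C\subseteq B(0,R)$ for a universal constant $R$; (ii)~build an inscribed polygon $K$ by repeatedly subdividing the ``caps'' of $C$ over the edges of the current polygon until every cap is thin; (iii)~bound the number of created vertices by $O(\eps^{-1/2})$. The kernel returned is $\tau^{-1}(K)$, where $\tau$ is the normalizing map.

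First, the reduction. Being an $\eps$-kernel is invariant under invertible affine maps: for a linear invertible $\tau$ one has $\dwidth(u,\tau X)=|\tau^{\top}u|\cdot\dwidth\bigl(\tau^{\top}u/|\tau^{\top}u|,\,X\bigr)$, so the ratio $\dwidth(u,\tau C_\eps)/\dwidth(u,\tau C)$ equals $\dwidth(v,C_\eps)/\dwidth(v,C)$ for the corresponding direction $v$, and translations are irrelevant. With $O(1)$ calls to the two primitive operations one finds a ``large'' inscribed triangle of $C$ --- take an arbitrary direction, let its two extreme points form a chord, then add the extreme point in a perpendicular direction --- and the affine map $\tau$ carrying this triangle to an equilateral one places $\tau(C)$ between two concentric disks of constant radius ratio, which after scaling we write $B(0,1)\subseteq\tau(C)\subseteq B(0,R)$. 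Each primitive operation for $\tau(C)$ costs $O(1)$ primitive operations for $C$, hence $O(T_C)$ time, so from now on assume $C$ is fat. (If $C$ is a segment or a point the lemma is trivial.)

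Second, the refinement. Start with the triangle $K_0$ spanned by the extreme points of $C$ in three directions $120^{\circ}$ apart. For an edge $e=[a,b]$ of the current polygon $K$ with outward unit normal $\nu$, the \emph{cap} over $e$ is the part of $C$ lying beyond $e$; call $e$ \emph{settled} if the whole cap lies within distance $\eps$ of $K$. While some edge $e=[a,b]$ is not settled, insert into $K$ the extreme point of $C$ in direction $\nu$ --- the deepest point of the cap over $e$ --- thereby splitting that cap into two; stop once every edge is settled, and output $K$. Both the test and, if it fails, the production of a point of $C$ at distance $>\eps$ from $K$ can be carried out with $O(1)$ primitive operations (a query in direction $\nu$, plus queries in directions tilted slightly from $\nu$ to cover the parts of the cap near $a$ and $b$). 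Upon termination every point of $C$ is within $\eps$ of $K\subseteq C$, hence $h_C(u)-h_K(u)\le\eps$ for all $u\in\U$; since $B(0,1)\subseteq C$ gives $\dwidth(u,C)\ge 2$, we get $\dwidth(u,C)-\dwidth(u,K)\le 2\eps\le\eps\cdot\dwidth(u,C)$, so $K$ is an $\eps$-kernel of $C$ and $\tau^{-1}(K)$ one of the original body.

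Finally, size and time. A refinement step is one extreme-point query and $O(1)$ arithmetic, i.e.\ $O(T_C)$ time, and the number of steps is $O(1)$ times the number of final caps, which is the number of vertices of $K$ (the refinement is a binary forest whose leaves are the final caps); so everything hinges on the claim that refinement stops after creating $O(\eps^{-1/2})$ vertices. This is essentially the classical fact that a fat planar convex body has an inscribed $O(\eps^{-1/2})$-gon within Hausdorff distance $\eps$: parametrizing $\partial C$ by the angle $\theta\in[0,2\pi]$ of the outer normal, a cap over an arc of angular length $\Delta\theta$ near $\theta$ has deviation roughly $\rho(\theta)(\Delta\theta)^2$, where $\rho$ is the radius of curvature and $\int_0^{2\pi}\rho\,d\theta$ is the perimeter of $C$, which is $O(1)$; forcing deviations down to $\eps$ needs $\Delta\theta\lesssim\sqrt{\eps/\rho(\theta)}$, and by Cauchy--Schwarz the number of arcs is $\lesssim \eps^{-1/2}\int_0^{2\pi}\sqrt{\rho}\,d\theta\le \eps^{-1/2}\sqrt{2\pi\int_0^{2\pi}\rho\,d\theta}=O(\eps^{-1/2})$; points of $C$ where $\partial C$ is not smooth carry atomic curvature, and including each once only helps. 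The main obstacle is making this count rigorous \emph{for the specific ``deepest-point'' subdivision rule} --- i.e.\ proving that adaptive subdivision attains the optimal $O(\eps^{-1/2})$ bound and not merely the $O(\eps^{-1})$ bound that $\Theta(\eps^{-1})$ uniformly spaced sample directions would force --- together with the low-level check that ``$e$ is settled?'' really is an $O(1)$-primitive test; the rest is routine bookkeeping.
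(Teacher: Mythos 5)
The paper does not prove this lemma at all --- it is imported verbatim from Ahn et al.\ \cite{abcnsv-06} --- so there is no in-paper proof to compare against. Judged on its own, your outline follows a sensible plan (affine normalization to a fat body, then build an inscribed polygon by splitting caps), and the reduction to the fat case via an $O(1)$-query inscribed triangle is fine. But the proposal has a genuine gap precisely where you flag it, and I do not think it can be waved away as ``routine bookkeeping.''

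The crux is the claim that the adaptive deepest-point subdivision stops after inserting $O(\eps^{-1/2})$ vertices. Your Cauchy--Schwarz/curvature computation proves the \emph{existence} of an inscribed $O(\eps^{-1/2})$-gon with cap depth $\eps$, i.e.\ it bounds $N_{\mathrm{opt}}(\eps)$. It does not bound the output of your greedy rule, and the reduction from one to the other is exactly the missing step. Note in particular that a single deepest-point split does \emph{not} halve cap depth: take, e.g., $a=(0,0)$, $b=(1,0)$, boundary chain through $(0.01,\,0.9\delta)$ and the apex $m=(0.5,\delta)$; the sub-cap over $[a,m]$ has depth close to $0.9\delta$. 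So a ``depth halves per level, hence $O(\log(1/\eps))$ depth of the refinement forest'' argument fails, and one needs either a charging/amortization argument comparing the greedy leaves against the arcs of an optimal inscribed polygon, or a different non-adaptive construction entirely (for instance the Agarwal--Har-Peled--Varadarajan scheme: after normalization, take $O(\eps^{-1/2})$ points on a surrounding circle and map each to its nearest point of $C$; or Dudley-type constructions). The cited Ahn et al.\ result is of the latter, non-adaptive type, which is what makes both the $O(\eps^{-1/2})$ vertex count and the $O(\eps^{-1/2})$ query count come out immediately.

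A smaller point worth tightening: the ``settled'' test should simply be ``is the extreme point of $C$ in direction $\nu$ within distance $\eps$ of the supporting line of $e$?''\,, i.e.\ cap depth $\le\eps$, which is literally one extreme-point query --- the ``tilted queries to cover the parts near $a$ and $b$'' are unnecessary. To go from cap depth $\le\eps$ to the $\eps$-kernel property you should argue directly with support functions: if the maximizer $p^*$ of $\scalar{\cdot,u}$ lies in the cap over $e=[a,b]$, write $p^*=q+t\nu$ with $q\in\ell$, $0\le t\le\eps$, and use that $q\in[a,b]$ (which holds when $a,b$ are the endpoints of $C\cap\ell$, the case produced by the algorithm) to get $h_C(u)-h_K(u)\le t\le\eps$; combined with $\dwidth(u,C)\ge 2$ from $B(0,1)\subseteq C$, this gives the kernel property, as you intend. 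That route avoids the detour through Hausdorff distance to $K$, which is not quite the quantity your one-query test controls.
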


\begin{lemma}
\label{le:invariant}
	Let $C_\eps$ be an $\eps$-kernel for $C$.
	If $\varphi$ is an invertible affine mapping,
	then $\varphi(C_\eps)$ is an $\eps$-kernel for $\varphi(C)$.
\end{lemma}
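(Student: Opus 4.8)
The plan is to reduce the whole statement to one elementary algebraic identity for directional widths under linear maps. First I would write $\varphi(x)=Ax+b$ with $A$ an invertible $2\times 2$ matrix and $b\in\RR^2$. The inclusion $\varphi(C_\eps)\subseteq\varphi(C)$ is immediate, since $\varphi$ is a bijection and $C_\eps\subseteq C$. Translating a convex body changes none of its directional widths, so it suffices to handle the linear case $\varphi=A$ and to verify
\[
	(1-\eps)\cdot\dwidth(u,AC) ~\le~ \dwidth(u,AC_\eps) \qquad\text{for every } u\in\U.
\]

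The key step is the identity $\scalar{Ap,u}=\scalar{p,A^{\top}u}$, which gives, for any convex body $C$ and any nonzero vector $u$,
\[
	\max_{p\in C}\scalar{Ap,u}-\min_{p\in C}\scalar{Ap,u} ~=~ \max_{p\in C}\scalar{p,A^{\top}u}-\min_{p\in C}\scalar{p,A^{\top}u}.
\]
Extending the directional-width formula to arbitrary (not necessarily unit) vectors by the same expression, this reads $\dwidth(u,AC)=\dwidth(A^{\top}u,C)$, and the so-extended function is positively homogeneous, $\dwidth(\lambda v,C)=\lambda\,\dwidth(v,C)$ for $\lambda\ge 0$.

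The one point worth stating carefully is that $A^{\top}u$ is in general not a unit vector, so the $\eps$-kernel hypothesis does not apply to it verbatim; this is exactly what homogeneity repairs. Writing $A^{\top}u=\|A^{\top}u\|\,v$ with $v\in\U$ (here $A^{\top}u\neq 0$ because $A$ is invertible and $u\neq 0$), the kernel inequality $(1-\eps)\dwidth(v,C)\le\dwidth(v,C_\eps)$ scales by $\|A^{\top}u\|$ to $(1-\eps)\dwidth(A^{\top}u,C)\le\dwidth(A^{\top}u,C_\eps)$. Applying the identity above to both $C$ and $C_\eps$ then yields
\[
	(1-\eps)\dwidth(u,AC) ~=~ (1-\eps)\dwidth(A^{\top}u,C) ~\le~ \dwidth(A^{\top}u,C_\eps) ~=~ \dwidth(u,AC_\eps),
\]
which is the desired bound. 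I do not expect any genuine obstacle here: the proof is a two-line computation once the directional width is viewed as a positively homogeneous function of its (possibly non-unit) direction vector.
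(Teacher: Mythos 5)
Your proof is correct, and it is essentially the same idea as the paper's --- both rest on the invariance of the width ratio $\dwidth(u,C_\eps)/\dwidth(u,C)$ under invertible affine maps --- but you actually supply the calculation where the paper merely asserts it, and in doing so you correctly handle a subtlety that the paper's displayed formula glosses over. The paper writes $\dwidth(u,C_\eps)/\dwidth(u,C) = \dwidth(u,\varphi(C_\eps))/\dwidth(u,\varphi(C))$ with the \emph{same} $u$ on both sides, and this identity is false in general (a non-isotropic scaling gives an easy counterexample); what is true is exactly what you derive, namely $\dwidth(u,AC)=\dwidth(A^{\top}u,C)$, so the ratio in direction $u$ after the map equals the ratio in the \emph{reparametrized} direction $A^{\top}u/\lVert A^{\top}u\rVert$ before the map. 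Since $A$ is invertible, $u\mapsto A^{\top}u/\lVert A^{\top}u\rVert$ is a bijection of $\U$, so the set of ratios --- and hence the infimum that defines the $\eps$-kernel property --- is preserved, which is all that is needed. Your explicit reduction to translations plus the linear part, the adjoint identity, and the homogeneity argument to pass from $A^{\top}u$ to a unit vector are all correct and make the proof self-contained in a way the paper's is not.
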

\begin{proof}
	The ratio of directional widths for convex bodies
	is invariant under invertible affine transformations.
	This means that  
	\[
		\forall u\in \U: ~~ 
		1-\eps 
		\le \frac{\dwidth (u,C_\eps)}{\dwidth (u,C)}
		=\frac{\dwidth (u,\varphi(C_\eps))}{\dwidth (u,\varphi(C))}
	\]
	and thus $\varphi(C_\eps)$ is an $\eps$-kernel for $\varphi(C)$.
\end{proof}

\begin{figure}
	\includegraphics[width=\columnwidth,page=1]{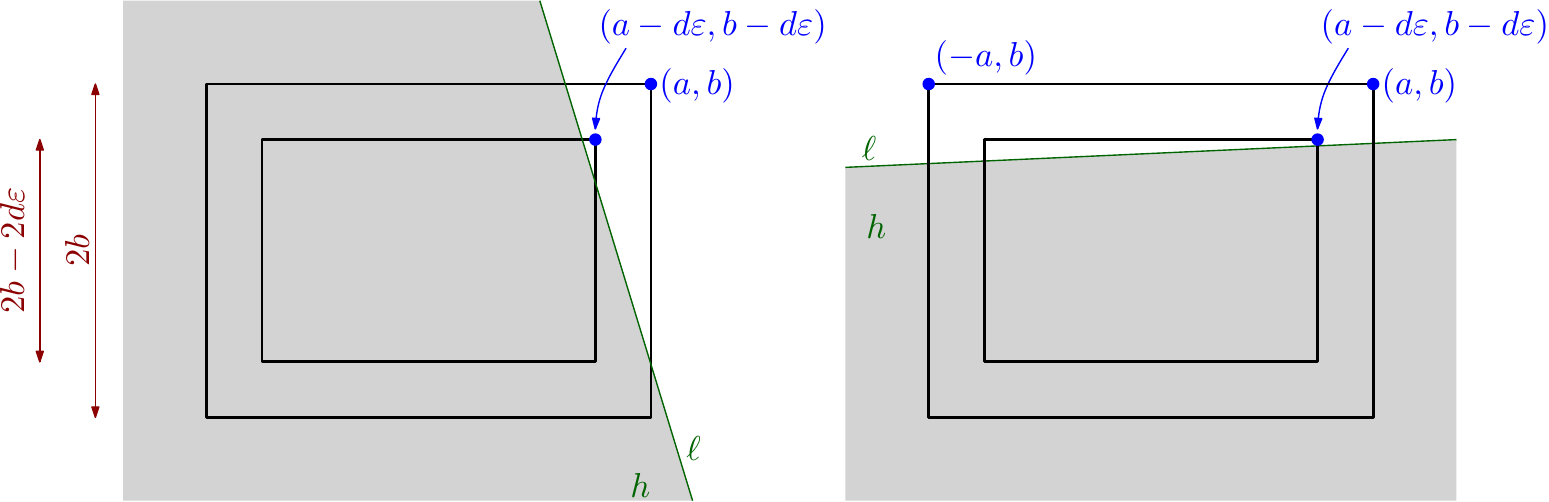}
    \caption{Proof of Lemma~\ref{le:square}.}
    \label{fig:square}
\end{figure}

\begin{lemma}
  \label{le:square}
  Assume that $C$ contains the rectangle $R = [-a,a] \times [-b,b]$,
  that $C$ has diameter $d$, and that $C_\eps$ is an $\eps$-kernel for
  $C$.  Then $C_\eps$ contains the axis-parallel rectangle $S = [-a +
    d\eps, a - d\eps] \times [-b + d\eps, b - d\eps]$.
\end{lemma}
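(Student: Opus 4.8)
The plan is to prove $S\subseteq C_\eps$ via a support-function argument rather than by trying to transport the rectangle $S$ into $C_\eps$ directly. Since $C_\eps$ is a closed convex body, a point $q$ lies in $C_\eps$ if and only if $\scalar{q,u}\le h(u):=\max_{p\in C_\eps}\scalar{p,u}$ for every $u\in\U$; equivalently, if $q\notin C_\eps$ then some direction $u$ strictly separates $q$ from $C_\eps$. So it suffices to bound $h(u)$ from below and $\scalar{q,u}$ from above, uniformly in $u\in\U$, for every $q\in S$. (If $a<d\eps$ or $b<d\eps$ then $S=\emptyset$ and there is nothing to prove, so we may assume $a,b\ge d\eps$.)

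First I would lower-bound $h(u)$. Because $C_\eps\subseteq C$ we have $\min_{p\in C_\eps}\scalar{p,u}\ge \min_{p\in C}\scalar{p,u}$, and because $C_\eps$ is an $\eps$-kernel we have $\dwidth(u,C_\eps)\ge (1-\eps)\,\dwidth(u,C)$. Adding these and using $\max = \min + \dwidth$,
\[
	h(u)=\min_{p\in C_\eps}\scalar{p,u}+\dwidth(u,C_\eps)
	\;\ge\; \min_{p\in C}\scalar{p,u}+(1-\eps)\,\dwidth(u,C)
	\;=\; \max_{p\in C}\scalar{p,u}-\eps\,\dwidth(u,C).
\]
Since $R\subseteq C$ and $\dwidth(u,C)\le d$ (the diameter equals $\max_{u}\dwidth(u,C)$), this gives $h(u)\ge \max_{p\in R}\scalar{p,u}-\eps d$.

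Next I would upper-bound $\scalar{q,u}$ for $q=(x,y)\in S$ and $u=(\cos\theta,\sin\theta)$. A direct computation gives $\max_{p\in R}\scalar{p,u}=a\lvert\cos\theta\rvert+b\lvert\sin\theta\rvert$, and since $\lvert x\rvert\le a-d\eps$ and $\lvert y\rvert\le b-d\eps$,
\[
	\scalar{q,u}\le (a-d\eps)\lvert\cos\theta\rvert+(b-d\eps)\lvert\sin\theta\rvert
	= \max_{p\in R}\scalar{p,u}-d\eps\bigl(\lvert\cos\theta\rvert+\lvert\sin\theta\rvert\bigr).
\]
The one elementary fact needed is $\lvert\cos\theta\rvert+\lvert\sin\theta\rvert\ge 1$, which follows by squaring: $(\lvert\cos\theta\rvert+\lvert\sin\theta\rvert)^2=1+2\lvert\cos\theta\sin\theta\rvert\ge 1$. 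Hence $\scalar{q,u}\le \max_{p\in R}\scalar{p,u}-d\eps\le h(u)$ for all $u\in\U$, so $q\in C_\eps$, and therefore $S\subseteq C_\eps$.

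I do not expect a serious obstacle: the only real choice is the framing — reducing containment to the support-function inequality $\scalar{q,u}\le h(u)$ — after which the argument collapses to the two one-line bounds above together with $\lvert\cos\theta\rvert+\lvert\sin\theta\rvert\ge 1$. The mild point to be careful about is combining the $\eps$-kernel inequality with the inclusion $C_\eps\subseteq C$ and the diameter bound in the correct order so that the loss on each supporting line is exactly $\eps d$, matching the shrink by $d\eps$ in the definition of $S$.
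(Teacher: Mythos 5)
Your proof is correct, and it takes a genuinely different route from the paper's. The paper argues by contradiction: if some vertex $s$ of $S$ lies outside $C_\eps$, one separates $s$ from $C_\eps$ by a closed halfplane $h$, and then a case analysis on the slope of $\partial h$ shows that some vertex of $R$ lies at distance at least $d\eps$ from $h$; for the direction $u$ normal to $\partial h$ this gives $\dwidth(u,C)-\dwidth(u,C_\eps) > d\eps \ge \eps\,\dwidth(u,C)$, contradicting the kernel property. You instead argue directly with the support function of $C_\eps$: you lower-bound it by $\max_{p\in R}\scalar{p,u} - \eps d$ using the kernel inequality together with $C_\eps\subseteq C$ and $\dwidth(u,C)\le d$, and you upper-bound $\scalar{q,u}$ for $q\in S$ by the same quantity, the only nontrivial step being $\lvert\cos\theta\rvert+\lvert\sin\theta\rvert\ge 1$. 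That elementary inequality is exactly the analytic counterpart of the paper's slope case analysis, and your framing absorbs all the cases into one uniform chain of inequalities. The trade-off: the paper's proof keeps the geometric picture of a separating line and a far vertex of $R$ visible, while yours is shorter, avoids both the contradiction and the symmetry reduction to a single vertex, and extends more mechanically (e.g., to other centrally symmetric shapes in place of $R$). Both are sound; yours is arguably the cleaner write-up.
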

\begin{proof}
  The statement is empty if $a < d\eps$ or $b < d\eps$, so assume that
  $a, b \geq d\eps$.  For the sake of contradiction, assume that $S$
  is not contained in $C_\eps$.  This means that one vertex of $S$ is
  not contained in $C_\eps$.  Because of symmetry, we can assume that
  $s=(a-d\eps, b-d\eps)$ is not contained in $C_\eps$. Since $C_\eps$
  is convex and $s\notin C_\eps$, there exists a closed halfplane $h$
  that contains $C_\eps$ but does not contain $s$.  Let $\ell$ be the
  boundary of $h$.
	
  We next argue that $R$ has some vertex at distance at least $d\eps$
  from $h$ (and thus $\ell$); see Figure~\ref{fig:square} for a couple
  of cases.  If $\ell$ has negative slope and $h$ is its lower
  halfplane, then the distance from $(a,b)$ to $\ell$ is at least
  $d\eps$.  If $\ell$ has negative slope and $h$ is its upper
  halfplane, then the distance from $(-a,-b)$ to $\ell$ is at least
  $2b-d\eps \geq d\eps$.  If $\ell$ has positive slope, then $(-a,b)$
  or $(a,-b)$ are at distance at least $d\eps$ from~$h$.
	
  Since $R\subseteq C$, for the direction $u$ orthogonal to~$\ell$ we
  have
  \[
  \dwidth(u,C) - \dwidth(u,C_\eps) > 
  d\eps \ge \eps \cdot \dwidth(u,C),		
  \]
  where we have used the assumption that $\dwidth(u,C)\le d$.
  This means that 
  \[
  \left( 1 - \eps \right)\cdot \dwidth(u,C) ~>~
  \dwidth(u,C_\eps),
  \]
  which contradicts that $C_\eps$ is an $\eps$-kernel
  for $C$.
\end{proof}

\section{Exact algorithms}
\label{sec:exact}

Let $e_1,\dots, e_n$ be the edges of the convex polygon $P$.
For each edge $e_i$ of $P$, let $h_i$ be the closed halfplane defined
by the line supporting $e_i$ that contains $P$.
Since $P$ is convex, we have $P=\bigcap_i h_i$.

We parameterize the set of \emph{parallelograms} in the plane by points 
in $\RR^6$, as follows.
We identify each $6$-dimensional point $(x_1,x_2,u_1,u_2,v_1,v_2)$
with the triple $(x,u,v)\in (\RR^2)^3$, where $x=(x_1,x_2)$, 
$u=(u_1,u_2)$, and $v=(v_1,v_2)$. The triple $(x,u,v)\in \RR^6$
corresponds to the parallelogram $\paral(x,u,v)$ with vertices
\[
	x,~ x+u,~ x+v,~ x+u+v.
\]
Thus, $x$ describes a vertex of the parallelogram $\paral(x,u,v)$,
while $u$ and $v$ are vectors describing the edges of $\paral(x,u,v)$.
This correspondence is not bijective because, for example,
\[
	\paral(x,u,v) ~=~ \paral(x+u+v,-u,-v) ~=~ 
	\paral(x,v,u) .
\] 
Nevertheless, each parallelogram is $\paral(x,u,v)$ for
some $(x,u,v)\in \RR^6$: the parallelogram given by 
the vertices $p_1p_2p_3p_4$
in clockwise (or counterclockwise) order 
is $\paral (p_1,p_2-p_1,p_4-p_1)$.

We are interested in the parallelograms contained in $P$.
To this end we define 
\[
	\Pi(P) ~=~ \big\{ (x,u,v)\in \RR^6 \mid \paral(x,u,v) \subseteq P\big\}.
\]
Since $P$ is convex, a parallelogram is contained in $P$ if and
only if each vertex of the parallelogram is in~$P$.
Therefore 
\begin{align*}
	\Pi(P) ~&=~ \big\{ (x,u,v)\in \RR^6 \mid x,\, x+u,\, x+v,\, x+u+v \in P\big\} \\
	&=~ \big\{ (x,u,v)\in \RR^6 \mid \forall i: x,\, x+u,\, x+v,\, x+u+v \in h_i \big\} \\
	&=~ \bigcap_i \big\{ (x,u,v)\in \RR^6 \mid x,\, x+u,\, x+v,\, x+u+v \in h_i \big\}.
\end{align*}
Since $\Pi(P)$ is trivially bounded, 
it follows that $\Pi(P)$ is a convex polytope in $\RR^6$ 
defined by $4n$ linear constraints. 
The Upper Bound Theorem~\cite{McMullen} implies that $\Pi(P)$ has combinatorial
complexity at most~$O(n^3)$. Chazelle's algorithm~\cite{chazelle93}
gives a triangulation of the boundary of $\Pi(P)$
in $O(n^3)$ time; 
Seidel~\cite{seidel} calls this the boundary description of a polytope. 
From the triangulation of the boundary
we can construct a triangulation of $\Pi(P)$: chose
an arbitrary vertex $x$ of $\Pi(P)$ and add it to 
each simplex of the triangulation of the boundary of $\Pi(P)$ that does not contain $x$.
(One can also use a point in the interior of $\Pi(P)$.)

The set of \emph{rectangles} is obtained by restricting
our attention to triples $(x,u,v)$ with $\scalar{u,v}=0$, 
where $\scalar{\cdot,\cdot}=0$ again denotes the scalar product of two vectors.
This constraint is non-linear. Because of this, it
is more convenient to treat each simplex of a triangulation of
$\Pi(P)$ separately. When $\scalar{u,v}=0$, 
the area of $\paral(x,u,v)$ is $|u|\cdot |v|$.

Consider any simplex $\triangle$ of the triangulation of $\Pi(P)$. 
Finding the maximum area rectangle restricted to $\triangle$ 
corresponds to the problem
\begin{align*}
	\opt(\triangle) ~=~\max 		~& |u|^2\cdot |v|^2\\
		\text{s.t.} ~& (x,u,v)\in \triangle\\
				 & \scalar{u,v}=0
\end{align*}
This is a constant-size problem. It has $6$ variables and a constant
number of constraints; all constraints but one are linear.
The optimization function has degree four.
In any case, each such problem can be solved in constant time.
When the problem is not feasible, we set $\opt(\triangle)=0$.

Taking the best rectangle over all simplices of a triangulation of
$\Pi(P)$, we find a maximum area rectangle.
Thus, we return $\arg\max_{\triangle} \opt(\triangle)$.
We have shown the following.
\begin{theorem}
\label{th:exact}
	Let $P$ be a convex polygon with $n$ vertices. 
	In time $O(n^3)$ we can find a maximum-area rectangle
	contained in $P$.
\end{theorem}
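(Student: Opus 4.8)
The plan is to build the parallelogram-parametrization machinery and then observe that a largest rectangle inside $P$ is obtained by optimizing over that parameter space. First I would set up notation: writing $e_1,\dots,e_n$ for the edges of $P$ and $h_i$ for the supporting halfplane of $e_i$ containing $P$, so that $P=\bigcap_i h_i$ by convexity. Then I would parametrize parallelograms by triples $(x,u,v)\in\RR^6$, where $\paral(x,u,v)$ has vertices $x,x+u,x+v,x+u+v$; this correspondence is surjective onto the set of (possibly degenerate) parallelograms, which is all that is needed. Restricting to parallelograms contained in $P$ gives the set $\Pi(P)\subseteq\RR^6$.

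The key observation is that, because $P$ is convex, a parallelogram lies in $P$ iff all four of its vertices do; hence the membership condition $\paral(x,u,v)\subseteq h_i$ is a conjunction of four linear inequalities in $(x,u,v)$, and $\Pi(P)$ is the intersection of $4n$ halfspaces in $\RR^6$. Since $\Pi(P)$ is clearly bounded, it is a convex polytope defined by $4n$ constraints, so by the Upper Bound Theorem its combinatorial complexity is $O(n^{\lfloor 6/2\rfloor})=O(n^3)$, and Chazelle's algorithm produces a triangulation of it in $O(n^3)$ time. Next I would impose the rectangle condition $\scalar{u,v}=0$; this is a single nonlinear (quadratic) constraint, so rather than intersecting it with the polytope globally, I would handle each of the $O(n^3)$ simplices $\triangle$ of the triangulation separately and solve, within $\triangle$, the constant-size optimization $\max |u|^2|v|^2$ subject to $(x,u,v)\in\triangle$ and $\scalar{u,v}=0$ (using $|u|^2|v|^2$ rather than $|u||v|$ to keep the objective polynomial; area equals $|u|\,|v|$ when $u\perp v$). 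Each such problem has six variables, a constant number of constraints all but one of which are linear, and a degree-four objective, so it is solvable in $O(1)$ time; taking the best simplex solution yields the global optimum in $O(n^3)$ total time.

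The part requiring the most care is the complexity bound on $\Pi(P)$ and the correctness of decomposing the optimization per simplex. For the first, one must make sure $\Pi(P)$ is genuinely a polytope — boundedness follows since $P$ is bounded and the vertices of $\paral(x,u,v)$ must lie in $P$, which pins down $x$, $u=(x+u)-x$, and $v=(x+v)-x$ — and then invoke the Upper Bound Theorem with $d=6$ and $4n$ facets to get $O(n^3)$; Chazelle's triangulation algorithm and the standard coning trick (pick a vertex or interior point and cone the boundary triangulation to it) upgrade this to a triangulation of the solid. For the second, since every rectangle in $P$ arises as $\paral(x,u,v)$ for some $(x,u,v)\in\Pi(P)$ with $\scalar{u,v}=0$, and since the simplices cover $\Pi(P)$, the maximum of $|u|^2|v|^2$ over all feasible triples equals the maximum over simplices of $\opt(\triangle)$; returning $\arg\max_\triangle \opt(\triangle)$ (and setting $\opt(\triangle)=0$ when the per-simplex problem is infeasible) therefore recovers a maximum-area rectangle. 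Summing the $O(1)$ cost per simplex over $O(n^3)$ simplices, together with the $O(n^3)$ preprocessing, gives the claimed $O(n^3)$ running time.
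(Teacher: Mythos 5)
Your proposal is correct and follows essentially the same approach as the paper: parametrize parallelograms by $(x,u,v)\in\RR^6$, observe that $\Pi(P)$ is a polytope cut out by $4n$ halfspaces with $O(n^3)$ complexity by the Upper Bound Theorem, triangulate it via Chazelle's algorithm plus coning, and solve a constant-size polynomial optimization with the rectangle constraint $\scalar{u,v}=0$ on each simplex. The argument, including the choice of the degree-four objective $|u|^2|v|^2$ and the per-simplex decomposition, matches the paper's proof.
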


To maximize the perimeter, we apply the same approach.
For each simplex $\triangle$ in a triangulation of $\Pi(P)$
we have to solve the following problem:
\begin{align*}
	\opt(\triangle) ~=~\max 		~& |u| + |v|\\
		\text{s.t.} ~& (x,u,v)\in \triangle\\
				 & \scalar{u,v}=0
\end{align*}
Combining the solutions over all simplices of the triangulation
we obtain the following.
\begin{theorem}
\label{th:exact2}
	Let $P$ be a convex polygon with $n$ vertices. 
	In time $O(n^3)$ we can find a maximum-perimeter rectangle
	contained in $P$.
\end{theorem}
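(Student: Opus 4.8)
The plan is to follow, almost verbatim, the proof of Theorem~\ref{th:exact}, changing only the per-simplex objective function. The expensive part of that argument --- building the polytope $\Pi(P)\subseteq\RR^6$ of all triples $(x,u,v)$ with $\paral(x,u,v)\subseteq P$, bounding its combinatorial complexity by $O(n^3)$ via the Upper Bound Theorem, and computing a triangulation of $\Pi(P)$ in $O(n^3)$ time via Chazelle's algorithm --- is completely independent of which functional of the rectangle we wish to maximize, so I would reuse it unchanged. The whole task then reduces to solving, for each simplex $\triangle$ of the triangulation, the constant-size problem
\begin{align*}
 \opt(\triangle) ~=~ \max ~& |u|+|v|\\
 \text{s.t.} ~& (x,u,v)\in\triangle\\
 & \scalar{u,v}=0 ,
\end{align*}
setting $\opt(\triangle)=0$ when it is infeasible, and returning $\arg\max_{\triangle}\opt(\triangle)$.

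Next I would argue that each per-simplex problem is solvable in $O(1)$ time. The one genuine difference from the area case is that the perimeter objective $|u|+|v|=\sqrt{u_1^2+u_2^2}+\sqrt{v_1^2+v_2^2}$ is not a polynomial, whereas $|u|^2|v|^2$ was. I would absorb the square roots by introducing two auxiliary scalars $p,q\ge 0$ with $p^2=u_1^2+u_2^2$ and $q^2=v_1^2+v_2^2$, so the task becomes maximizing the \emph{linear} function $p+q$ over a constant-size semialgebraic subset of $\RR^8$ cut out by a constant number of (in)equalities of constant degree. Since $\Pi(P)$ is bounded, $\triangle$ is compact, the constraint $\scalar{u,v}=0$ is closed, and $|u|+|v|$ is continuous, so the maximum is attained; and because each instance has constant size, it can be solved in constant time (for example by running a fixed algorithm for the first-order theory of the reals, or by enumerating the constantly many candidate optima coming from the Lagrange/KKT conditions on each face of $\triangle$).

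Finally, combining the pieces: the triangulation of $\Pi(P)$ has $O(n^3)$ simplices, each contributing $O(1)$ work, so the total running time --- triangulation plus all per-simplex optimizations --- is $O(n^3)$, and the rectangle achieving $\max_{\triangle}\opt(\triangle)$ is a maximum-perimeter rectangle contained in $P$. The degenerate solutions explicitly allowed by the statement (a single segment of length $L$, counted with perimeter $2L$) need no special handling: such a segment is $\paral(x,u,0)$, which lies in $\Pi(P)$ precisely when the segment lies in $P$, and $\scalar{u,0}=0$ holds trivially, so these candidates are already examined. I expect the only subtle point to be the clean justification that the non-polynomial per-simplex objective can still be optimized in constant time; once the square roots are moved into auxiliary variables as above this is routine, and everything else is a direct transcription of the area argument.
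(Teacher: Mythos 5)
Your proposal matches the paper's proof: the paper likewise reuses the $O(n^3)$ triangulation of $\Pi(P)$ from Theorem~\ref{th:exact}, replaces the per-simplex objective with $|u|+|v|$ subject to $(x,u,v)\in\triangle$ and $\scalar{u,v}=0$, and takes the best simplex. The paper is terse and does not spell out why the non-polynomial objective is still solvable in $O(1)$ time per simplex; your auxiliary-variable reduction to a constant-size semialgebraic problem is a reasonable way to fill that in, and the rest is the same argument.
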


\section{Combinatorially distinct rectangles}

The reader may wonder if the algorithm of the previous section cannot be improved:
It constructs the space~$\Pi(P)$ of all \emph{parallelograms} contained in~$P$, 
and then considers the intersection with the manifold $\scalar{u,v} = 0$ corresponding to 
the rectangles.  If the complexity of this intersection was smaller than~$\Theta(n^3)$, 
then we should avoid constructing the entire parallelogram space~$\Pi(P)$ first.

In this section we show that this is not the case: the complexity of the space of rectangles
that fit inside~$P$, that is, the complexity of the intersection of~$\Pi(P)$ with the 
manifold~$\scalar{u, v} = 0$, is already~$\Theta(n^3)$ in the worst case. Therefore,
asymptotically we are not loosing anything by considering all parallelograms, 
instead of directly concentrating on rectangles.

To this end, let us call two rectangles contained in a convex polygon~$P$ 
\DEF{combinatorially distinct} 
if their vertices are incident to a different subset of edges of~$P$.
We are going to show the following: for every sufficiently large $n$ there is a 
polygon $P$ with $n$ vertices that contains $\Theta(n^3)$ combinatorially
distinct rectangles. This shows that any algorithm iterating over all
combinatorially distinct rectangles contained in $P$ needs at least
$\Omega(n^3)$ time. Our algorithm falls in this category.

We provide an informal overview of the construction, see Figure~\ref{fig:lowerbound1}.
\begin{figure}
	\centering
	\includegraphics[scale=.8,page=1]{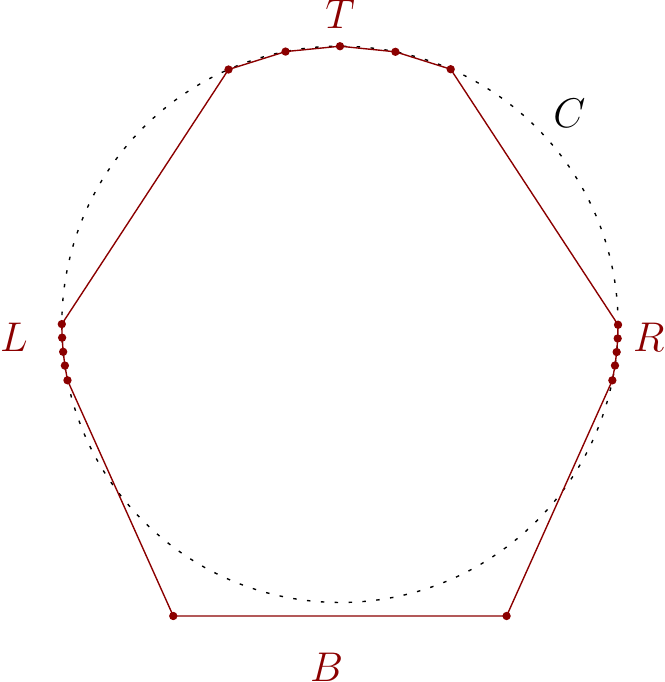}
    \caption{Outline of the construction.}
    \label{fig:lowerbound1}
\end{figure}
For simplicity, we are going to use $3n+2$ vertices. 
Consider the circle $C$ of unit radius centered at the origin.
We are going to select some points on $C$, plus two additional points. 
The polygon $P$ is then described as the convex hull of these points.
The points are classified into 4 groups. 
We have a group $L$ of $n$ points
placed densely on the left side of $C$.
We have another group $R$ of $n$ points placed densely on the right 
side of $C$.
The third group $T$, also with $n$ points, is placed on the upper part of $C$.
The points of $T$ are more spread out and will be chosen carefully.
Finally, we construct a group $B$ of two points placed below $C$. The construction
will have the property that, for any edge $e_L$ defined by $L$, any edge $e_R$
defined by $R$, and any edge $e_T$ defined by $T$, there is a rectangle
contained in $P$ with vertices on the edges $e_L$, $e_R$ and $e_T$. The points
$B$ are needed only to make sure that the bottom part of the rectangle
is contained in $P$; they do not have any particular role.

\begin{theorem}
	For any sufficiently large value of $n$ there is a polygon $P$ with
	$n$ vertices such that $P$ contains $\Theta(n^3)$ combinatorially distinct 
	rectangles.
\end{theorem}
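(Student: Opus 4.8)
The plan is to realise Figure~\ref{fig:lowerbound1} with explicit coordinates and then to exhibit, for every triple $(e_L,e_R,e_T)$ with $e_L$ spanned by $L$, $e_R$ by $R$ and $e_T$ by $T$, one rectangle contained in $P$ whose vertices lie on these three edges. Place the unit circle $C$ at the origin $O$, put the $n$ points of $L$ on a very short arc lying slightly \emph{below} the point $(-1,0)$ and the $n$ points of $R$ on a very short arc lying slightly below $(1,0)$ (so the $L$- and $R$-edges are short, almost vertical chords whose endpoints both have negative ordinate), spread the $n$ points of $T$ over a fixed sub-arc of the upper semicircle, and put the two points of $B$ low and wide enough below $C$ that the lower boundary of $P=\mathrm{conv}(L\cup R\cup T\cup B)$ dips well below the lower arc of $C$. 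Two scale parameters drive everything: the angular radius $\gamma$ of the $L$- and $R$-clusters, and the angular spacing of the $T$-points; these are tuned at the end.

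The geometric core is a Thales-circle observation. Three points $V_1,V_2,V_3$ on the lines $\ell_L,\ell_T,\ell_R$ supporting $e_L,e_T,e_R$ are three consecutive vertices of a rectangle, with the right angle at $V_2$, precisely when $V_2$ lies on the circle with diameter $\overline{V_1V_3}$; the fourth vertex is then forced to be $V_4=V_1+V_3-V_2$, and $V_1V_2V_3V_4$ is automatically a rectangle (a parallelogram with one right angle). Here $\overline{V_1V_3}$ is a diagonal, so this \emph{Thales circle} has centre $M=\tfrac12(V_1+V_3)$ and radius $\rho=\tfrac12|V_1-V_3|$. When $V_1$ runs over $e_L$ and $V_3$ over $e_R$ — both interior points of short chords near the antipodal points $(\mp1,0)$ — we get $|M|=O(\gamma)$ and $\rho=1-O(\gamma)$, and in fact $\rho<1$ as soon as $V_1,V_3$ are interior to their chords, since then $|V_1|,|V_3|<1$. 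Thus the Thales circle is a slight perturbation of $C$; moreover, since every admissible $V_1,V_3$ has negative ordinate, $M$ lies below $O$, so the Thales circle stays strictly inside $C$ over the whole upper part of $C$.

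The key step, for a fixed triple, is to choose $V_1\in e_L$ and $V_3\in e_R$ so that the Thales circle of $\overline{V_1V_3}$ actually crosses the \emph{segment} $e_T$ and not merely $\ell_T$, and then to check $V_4\in P$. For the first part, each chord $e_T$ has both endpoints on $C$, hence strictly outside the slightly smaller interior Thales circle, so $e_T$ meets the Thales circle in two points of its relative interior as soon as $e_T$ reaches close enough to $O$, i.e.\ as soon as $\mathrm{dist}(O,\ell_T)<\rho-|M|$. Expanding, this asks the $T$-spacing to dominate $\sqrt{\gamma}$, so one takes the $T$-points with spacing $\Theta(1/n)$ and $\gamma$ much smaller than $1/n^2$; this quantitative balancing, made uniform over all $\Theta(n^3)$ triples and all $n$, is where $T$ must be ``chosen carefully'' and is the main obstacle. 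For the second part, $V_4=2M-V_2$ is the reflection through a point just below $O$ of a point near the upper arc, so it lies near the lower arc of $C$, at worst slightly below it; having chosen $B$ low and wide enough makes $V_4\in P$, and then convexity of $P$ together with $V_1,V_2,V_3,V_4\in P$ gives the whole rectangle inside $P$.

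Finally one counts. For each admissible triple the rectangle just produced has $V_1$, $V_3$, $V_2$ in the relative interiors of $e_L$, $e_R$, $e_T$ respectively, and $V_4$ off these three edges; hence the set of edges of $P$ incident to the rectangle determines the triple, so distinct triples give combinatorially distinct rectangles. As $L$, $R$ and $T$ each span $n-1$ edges, $P$ contains at least $(n-1)^3=\Theta(n^3)$ combinatorially distinct rectangles while having only $3n+2$ vertices; absorbing the constant, and padding with one or two harmless extra vertices (e.g.\ subdividing a bottom edge) when the desired vertex count is not of the form $3n+2$, gives the statement for every sufficiently large vertex count. The matching upper bound is already implicit in Section~\ref{sec:exact}: a combinatorial type of inscribed rectangle is a choice of tight halfplane constraints, hence selects a face of the polytope $\Pi(P)$, and $\Pi(P)$ has only $O(n^3)$ faces, so the total is $\Theta(n^3)$.
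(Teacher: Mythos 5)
Your proposal is correct and takes essentially the same approach as the paper: both place two tight clusters $L,R$ of $n$ points near the antipodal points $(\mp1,0)$ just below the equator of a unit circle, a set $T$ of $n$ points on the upper arc spaced so that each $T$-chord crosses the Thales circle of any diagonal $\overline{V_1V_3}$ with $V_1\in e_L$, $V_3\in e_R$, and a couple of low points $B$ to capture the fourth vertex, then observe that each of the $\Theta(n^3)$ triples $(e_L,e_R,e_T)$ yields a combinatorially distinct inscribed rectangle. The paper makes the ``$T$-chord crosses the Thales circle'' condition precise by introducing a concentric circle $C'$ of radius $1-2\eps$ and proving the upper semicircle of every Thales circle lies in the annulus between $C$ and $C'$, while you phrase the same condition directly as $\mathrm{dist}(O,\ell_T)<\rho-|M|$ and then tune $\gamma\ll 1/n^2$ against the $\Theta(1/n)$ spacing of $T$; these are quantitatively equivalent. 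You additionally spell out the padding to hit an arbitrary vertex count and recall that $\Pi(P)$ supplies the matching $O(n^3)$ upper bound, points the paper leaves implicit, but there is no substantive difference in method.
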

\begin{proof}
	Let $C$ be the circle of unit radius centered at the origin~$o$,
	and let $C'$ be the circle of radius $1-2\eps$ centered at~$o$, for a 
    small value~$\eps>0$ to be chosen later on.
	Let $H_\eps$ be the horizontal strip defined by $-\eps\le y\le 0$, 
    see Figure~\ref{fig:lowerbound2}.
	Select a set $R$ of $n$ points in $C\cap H_\eps$ with positive $x$-coordinate
	and select a set $L$ of $n$ points in $C\cap H_\eps$ with negative $x$-coordinate.
	For every $r$ on a segment connecting consecutive points of $R$ 
	and every $\ell$ on a segment connecting consecutive points of $L$, 
	let $C_{r\ell}$ be the circle with diameter $r\ell$.  
	We now observe that the upper semicircle of $C_{r\ell}$ with endpoints~$r$ and~$\ell$ 
    lies between $C$ and $C'$.
    This follows 
	from the fact that $C_{r\ell}$ has radius at least $1-\eps$ and the center
	of $C_{r\ell}$ is at most $\eps$ apart from~$o$.
\begin{figure}
	\centering
	\includegraphics[scale=.6,page=2]{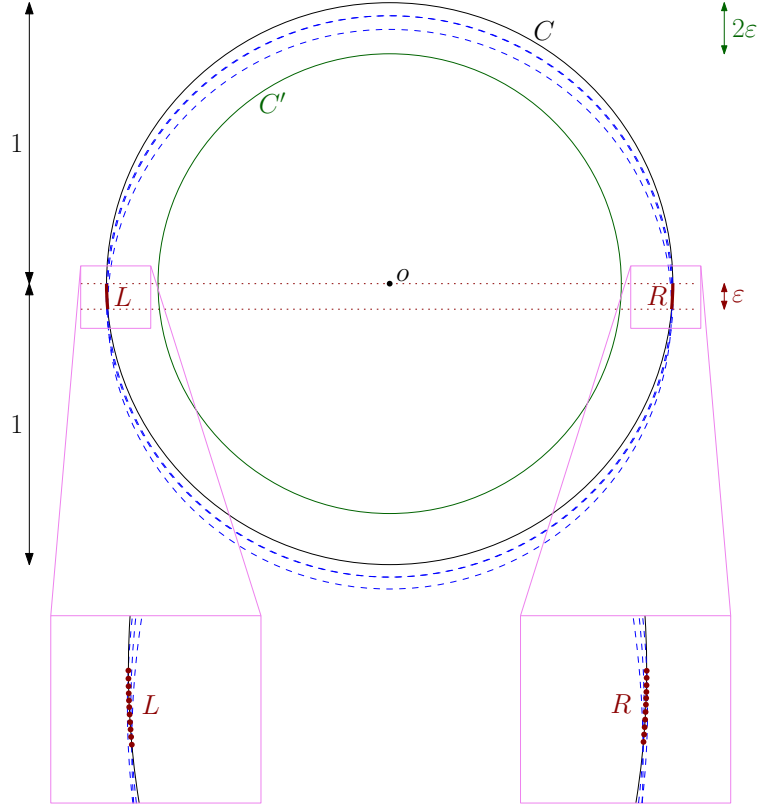}
    \caption{Detail of the construction of $L$ and $B$. Some circles $C_{r\ell}$
		are in dashed blue.}
    \label{fig:lowerbound2}
\end{figure}

	Figure~\ref{fig:lowerbound3} should help with the continuation of the construction.
	We place a set $T$ of $n$ points on the upper side of $C$. We want the following
	additional property: for any two consecutive points $t$ ant $t'$ of $T$,
	the segment $tt'$ intersects $C'$. If we select $\eps>0$ sufficiently small,
	then $C$ and $C'$ are close enough that we can select the $n$ points needed
	to construct $T$. Finally, we choose a set $B$ of two points below $C$, as shown in 
	Figure~\ref{fig:lowerbound1}. The final polygon $P$ is the convex hull
	of $L\cup R\cup T\cup B$. This finishes the description of the polygon $P$.
	
	Consider any edge $e_R$ defined by two consecutive vertices of $R$ 
	and chose a point $r$ on $e_R$.
	Similarly, consider any edge $e_L$ defined by two consecutive vertices of $L$ 
	and chose a point $\ell$ on $e_L$.
	Let $e_T$ be an edge of $P$ defined by two consecutive points of $T$. 
	By construction, the circle $C_{r\ell}$ with diameter $r\ell$
	intersects the segment $e_T$ in some point, let's call it~$p$. 
	This means that the triangle $\triangle(r,p,\ell)$ has a
    right angle at~$p$.  Let~$q$ be the point on~$C_{r\ell}$ such that $pq$ is a diameter
    of~$C_{r\ell}$.  Then the quadrilateral formed by $r$, $p$, $\ell$, and~$q$ 
    is a rectangle contained in~$P$.
	Each choice of an edge $e_R$ defined by $R$, an edge $e_L$ defined by $L$, 
	and an edge $e_T$ defined by $T$ results in a combinatorially distinct rectangle.
	Therefore there are $\Omega(n^3)$ combinatorially distinct rectangles contained in~$P$.	
\end{proof}

\begin{figure}
	\centering
	\includegraphics[scale=.6,page=3]{lowerbound}
    \caption{Detail of the construction of $T$.}
    \label{fig:lowerbound3}
\end{figure}

\section{Approximation algorithm to maximize the area}
\label{sec:approximation1}

The algorithm is very simple: we compute an $(\eps/32)$-kernel 
$C_{\eps/32}$ for the input convex body $C$,
compute a maximum-area rectangle contained in $C_{\eps/32}$ and return it.
We next show that this algorithm indeed returns a $(1-\eps)$-approximation.

Let $\Ropt$ be a maximum-area rectangle contained in $C$, and
let $\varphi$ be an affine transformation such that
$\varphi(\Ropt)$ is the square $[-1,1]^2$. 

\begin{lemma}
\label{le:bound}
	The diameter of $\varphi(C)$ is at most $16$.
\end{lemma}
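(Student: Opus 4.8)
The plan is to establish the stronger containment $\varphi(C)\subseteq D$, where $D=\{(x,y)\in\RR^2:|x|+|y|\le 8\}$ is the square with vertices $(\pm 8,0),(0,\pm 8)$, whose diameter is $16$; this gives the claim at once. The only inputs are that $\varphi$ is an invertible affine map with $\varphi(\Ropt)=[-1,1]^2$, and that, by definition of $\Ropt$, every rectangle contained in $C$ has area at most $\area(\Ropt)$.

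Fix $p\in C$ and set $z=(z_1,z_2)=\varphi(p)$. Let $q_1q_2q_3q_4$ be the vertices of $\Ropt$ in cyclic order, labelled so that $\varphi$ maps the diagonal $q_1q_3$ onto the diagonal of $[-1,1]^2$ from $(-1,-1)$ to $(1,1)$; this is possible since an affine map carries the diagonals of $\Ropt$ to the diagonals of $[-1,1]^2$, and then automatically $\varphi$ maps $q_2q_4$ onto the diagonal from $(-1,1)$ to $(1,-1)$. By convexity, $\triangle(q_1,q_3,p)\subseteq C$. Every triangle contains an inscribed rectangle of area equal to half the triangle's area (take a longest side as base and reach half its height), so $C$ contains a rectangle of area $\tfrac12\area(\triangle(q_1,q_3,p))$, and maximality of $\Ropt$ forces $\tfrac12\area(\triangle(q_1,q_3,p))\le\area(\Ropt)$. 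Since invertible affine maps preserve ratios of areas, applying $\varphi$ turns this into $\tfrac12\area(\triangle((-1,-1),(1,1),z))\le\area([-1,1]^2)=4$. An elementary computation gives $\area(\triangle((-1,-1),(1,1),z))=|z_1-z_2|$, so $|z_1-z_2|\le 8$; running the same argument with $q_2q_4$ in place of $q_1q_3$ gives $|z_1+z_2|\le 8$. As $|z_1|+|z_2|=\max\{|z_1-z_2|,\,|z_1+z_2|\}$, we conclude $z\in D$, and since $p\in C$ was arbitrary, $\varphi(C)\subseteq D$.

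The point to get right is that one should \emph{not} look for a large rectangle inside $\varphi(C)$ directly: because $\varphi$ need not be a similarity, a rectangle in $\varphi(C)$ tilted with respect to the coordinate axes need not pull back to a rectangle in $C$, so the maximality of $\Ropt$ says nothing about it. Building the auxiliary rectangles inside $C$, over the two diagonals of $\Ropt$ — where they are genuine rectangles and maximality applies — and only then transporting the area inequality through $\varphi$ is what makes the argument close, and it yields exactly the constant $16$. The remaining details (the vertex labelling, and the area formula for a triangle with two prescribed vertices) are routine.
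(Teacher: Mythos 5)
Your proof is correct, and it follows a genuinely different route from the paper's. The paper first invokes a nontrivial external result (due to Lassak) that every planar convex body contains a rectangle of at least half its area; this gives $\area(\varphi(C))\le 8$, and then a point of $\varphi(C)$ at distance $>8$ from the origin would, together with a unit-length half-diagonal segment of the square, span a triangle of area $>8$ inside $\varphi(C)$, a contradiction. You instead bypass Lassak entirely: you take a triangle spanned inside $C$ by one diagonal of $\Ropt$ and the arbitrary point $p$, use the elementary fact that a triangle contains an inscribed rectangle of exactly half its area (longest side as base, half the height), appeal to maximality of $\Ropt$, and transport the resulting area inequality through $\varphi$ to get $|z_1-z_2|\le 8$, and symmetrically $|z_1+z_2|\le 8$. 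This is more self-contained — the half-area-rectangle-in-a-triangle lemma is a one-line construction, whereas the half-area-rectangle-in-a-convex-body result is a theorem in its own right — and it even yields the slightly stronger containment in the square $|x|+|y|\le 8$ rather than the disk of radius $8$. Both give diameter $\le 16$. Your closing remark — that one must build the auxiliary rectangles in $C$ over the diagonals of $\Ropt$, and not hunt for tilted rectangles in $\varphi(C)$ where maximality gives no control — is exactly the right thing to flag; it is the same reason the paper routes its argument through areas rather than rectangles on the $\varphi(C)$ side. One small point worth making explicit if you write this up: the half-area inscribed rectangle construction requires the base to be a longest side so that both base angles are at most $\pi/2$ and the altitude foot lies on the base; you do say "take a longest side," which is precisely what makes it work.
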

\begin{proof}
	We will show that $\varphi(C)$ is contained in the disk 
	centered at	the origin $o=(0,0)$ of radius $8$, which implies the result.

	Any convex body contains a rectangle with at least half of its 
	area~\cite{Lassak}. Therefore
	$\area(\Ropt)/\area(C)\ge 1/2$.
	
	Any invertible affine transformation does not change the ratio
	between areas of objects. Therefore
	\[
		\frac 12 ~\le~ \frac{\area(\Ropt)}{\area(C)} ~=~ 
		\frac{\area(\varphi(\Ropt))}{\area(\varphi(C))} ~=~
		\frac{4}{\area(\varphi(C))}
	\]
	and thus $\area(\varphi(C))\le 8$.
		
	\begin{figure}
		\centering
		\includegraphics[]{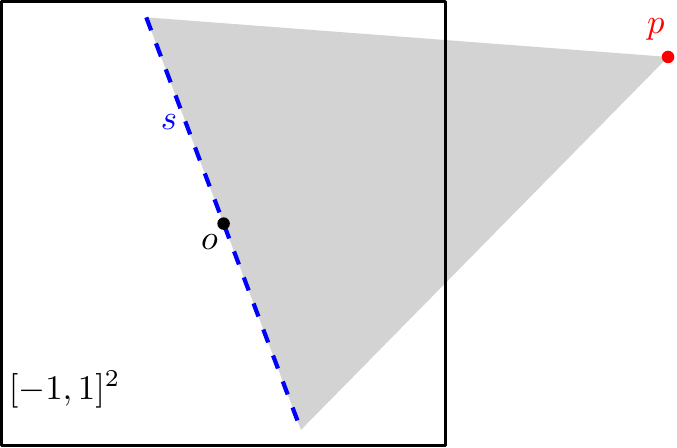}
		\caption{Triangle considered in the proof of Lemma~\ref{le:bound}. The point $p$ is
			drawn closer to the origin $o$ than it is assumed.}
		\label{fig:bound}
	\end{figure}

	Assume, for the sake of reaching a contradiction, that
	$\varphi(C)$ has a point $p$ at distance larger 
    than $8$ from the origin $o$. See Figure~\ref{fig:bound}.
	Let $s$ be the line segment
    of length $2$ centered at the origin $o$ and orthogonal
    to the segment $op$. Since $s$ is contained in the square $[-1,1]^2$,
    it is contained in $\varphi(C)$. Therefore  $\varphi(C)$ contains
	the convex hull of $s\cup \{ p \}$, which is a triangle of area
	larger than $8$, and we get a contradiction. 
	It follows that $\varphi(C)$ is contained
	in a disk centered at the origin $0$ of radius $8$.	
\end{proof}

\begin{lemma}
\label{le:large}
	Let $C_\eps$ be an $\eps$-kernel for $C$. 
	Then $C_\eps$ contains a rectangle with area at least 
	$(1-32\eps)\cdot \area(\Ropt)$.
\end{lemma}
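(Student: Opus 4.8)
The plan is to transfer the problem to the normalized picture via the affine map $\varphi$ from Lemma~\ref{le:bound}, apply Lemma~\ref{le:square} there, and then pull everything back. First I would set $d$ to be the diameter of $\varphi(C)$, so that $d \le 16$ by Lemma~\ref{le:bound}. Since $\varphi(\Ropt)$ is the square $[-1,1]^2 = [-1,1]\times[-1,1]$ contained in $\varphi(C)$, Lemma~\ref{le:square} (with $a=b=1$ and $\eps$ in that lemma set to our $\eps$) tells us that any $\eps$-kernel of $\varphi(C)$ contains the axis-parallel square $S = [-1+d\eps,\, 1-d\eps]^2$, whose side length is $2(1-d\eps) \ge 2(1-16\eps)$.

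The second ingredient is the invariance of $\eps$-kernels under affine maps. By Lemma~\ref{le:invariant}, $\varphi(C_\eps)$ is an $\eps$-kernel for $\varphi(C)$, so the previous paragraph applies to it: $\varphi(C_\eps)$ contains the square $S$ of area $(2(1-d\eps))^2 \ge 4(1-16\eps)^2$. Applying $\varphi^{-1}$, we conclude that $C_\eps$ contains the parallelogram $\varphi^{-1}(S)$ — but this need not be a rectangle, so the last step is to note that $S$ is a scaled copy of $[-1,1]^2 = \varphi(\Ropt)$, namely $S = (1-d\eps)\cdot \varphi(\Ropt)$. Hence $\varphi^{-1}(S) = (1-d\eps)\cdot \Ropt$ (scaling about the appropriate center), which \emph{is} a rectangle, being a scaled copy of $\Ropt$. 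Its area is $(1-d\eps)^2 \cdot \area(\Ropt)$.

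It then remains to check the numerics: $(1-d\eps)^2 \ge (1-16\eps)^2 = 1 - 32\eps + 256\eps^2 \ge 1 - 32\eps$, so $C_\eps$ contains a rectangle of area at least $(1-32\eps)\cdot\area(\Ropt)$, as claimed.

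The main obstacle is the one conceptual point that $\varphi^{-1}$ does not preserve rectangularity in general, so one cannot simply push back an arbitrary axis-parallel rectangle from $\varphi(C_\eps)$; the construction must be arranged so that the rectangle found inside $\varphi(C_\eps)$ is a \emph{homothet} of the square $\varphi(\Ropt)$, since homothets of $\Ropt$ are exactly what survive pulling back through $\varphi$ as rectangles. Choosing $S$ concentrically scaled from $\varphi(\Ropt)$ — which is exactly the $S$ that Lemma~\ref{le:square} hands us — makes this work. Everything else is routine: the kernel existence and affine-invariance lemmas do the heavy lifting, and the only arithmetic is the inequality $(1-16\eps)^2 \ge 1-32\eps$.
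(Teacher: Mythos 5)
Your proof is correct and follows essentially the same route as the paper's: push forward by $\varphi$, apply Lemma~\ref{le:invariant} and Lemma~\ref{le:square} with $a=b=1$ and diameter bound $16$, observe that the resulting square is a homothet of $\varphi(\Ropt)$ so its preimage under $\varphi^{-1}$ is a rectangle (a scaled copy of $\Ropt$), and finish with $(1-16\eps)^2 \ge 1-32\eps$. The only cosmetic difference is that you carry the actual diameter $d$ of $\varphi(C)$ through the computation and bound $d\le 16$ at the end, whereas the paper plugs in $16$ directly; your explicit remark that $\varphi^{-1}$ preserves rectangles only for homothets of $\Ropt$ is the same observation the paper makes more tersely.
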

\begin{proof}
  Because of Lemma~\ref{le:invariant}, $\varphi(C_\eps)$ is an
  $\eps$-kernel for~$\varphi(C)$.  Since $\varphi(C)$
  contains~$[-1,1]^2$ and has diameter at most~$16$ due to
  Lemma~\ref{le:bound}, Lemma~\ref{le:square} with $a=b=1$ implies
  that $\varphi(C_\eps)$ contains the square $S=[-t,t]^2$,
  where~$t=1-16\eps$.
    
  Since $S$ is obtained by scaling $[-1,1]^2 = \varphi(\Ropt)$ by
  $1-16\eps$, its preimage $R = \varphi^{-1}(S)$ is obtained by
  scaling $\Ropt$ by~$1-16\eps$ about its center.  It follows that $R$
  is a rectangle with area
    \begin{align*}
    \area(R) ~=~ (1-16\eps)^2 \cdot \area(\Ropt) 
    ~\ge~ (1-32\eps) \cdot \area(\Ropt),
    \end{align*}
    and the lemma follows.
\end{proof}

\begin{theorem}
	Let $C$ be a convex body in the plane. 
	For any given $\eps\in (0,1)$, we can find a $(1-\eps)$-approximation
	to the maximum-area rectangle contained in $C$ 
	in time $O(\eps^{-1/2} T_C +\eps^{-3/2})$.
\end{theorem}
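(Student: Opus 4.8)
The plan is to combine the $\eps$-kernel construction of Lemma~\ref{le:kernel} with the exact algorithm of Theorem~\ref{th:exact}, using Lemma~\ref{le:large} to bound the loss in area. Concretely, the algorithm is as described at the start of this section: set $\eps' = \eps/32$, compute an $\eps'$-kernel $C_{\eps'}$ of $C$ via Lemma~\ref{le:kernel}, apply the exact $O(\cdot^3)$-time algorithm of Theorem~\ref{th:exact} to the convex polygon $C_{\eps'}$, and return the rectangle it produces.

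For correctness, first observe that $C_{\eps'}\subseteq C$, so every rectangle contained in $C_{\eps'}$ is contained in $C$; hence the output is a feasible rectangle for $C$. Next, let $\Ropt$ be a maximum-area rectangle contained in $C$. Applying Lemma~\ref{le:large} with parameter $\eps'=\eps/32$ shows that $C_{\eps'}$ contains a rectangle of area at least $(1-32\eps')\cdot\area(\Ropt) = (1-\eps)\cdot\area(\Ropt)$. Since Theorem~\ref{th:exact} returns a \emph{maximum}-area rectangle contained in $C_{\eps'}$, its output has area at least $(1-\eps)\cdot\area(\Ropt)$, which is exactly the claimed $(1-\eps)$-approximation guarantee.

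For the running time, Lemma~\ref{le:kernel} produces $C_{\eps'}$ in time $O((\eps')^{-1/2}\,T_C) = O(\eps^{-1/2}\,T_C)$, and $C_{\eps'}$ is a convex polygon with $m = O((\eps')^{-1/2}) = O(\eps^{-1/2})$ vertices, obtained in convex-position order. Running the exact algorithm of Theorem~\ref{th:exact} on this $m$-gon then costs $O(m^3) = O(\eps^{-3/2})$. Summing the two contributions gives the stated bound $O(\eps^{-1/2}\,T_C + \eps^{-3/2})$.

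There is no real obstacle here; all the substantive work was done in the preceding lemmas. The only points needing a little care are (i) the choice of kernel parameter $\eps/32$, so that the $32\eps'$ loss in Lemma~\ref{le:large} collapses to exactly $\eps$; and (ii) the observation that the cubic dependence of the exact algorithm is harmless precisely because it is applied to the kernel, whose size is only $O(\eps^{-1/2})$ and independent of the complexity of $C$, so the $n^3$ term becomes the $\eps^{-3/2}$ term. Specializing to a convex $n$-gon, where $T_C = O(\log n)$ by \cite{Chazelle-Dobkin,Reichling}, recovers the running time $O(\eps^{-3/2} + \eps^{-1/2}\log n)$ announced in the abstract.
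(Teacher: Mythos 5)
Your proof is correct and follows essentially the same path as the paper's: compute an $(\eps/32)$-kernel via Lemma~\ref{le:kernel}, run the exact $O(n^3)$ algorithm of Theorem~\ref{th:exact} on it, and invoke Lemma~\ref{le:large} for the approximation guarantee, with the same running-time accounting. The extra observations you add (feasibility from $C_{\eps'}\subseteq C$, and why the cubic cost is tamed by the kernel size) are sensible but do not change the argument.
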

\begin{proof}
	First, we compute an $(\eps/32)$-kernel $C_{\eps/32}$ to $C$. 
	We then compute a maximum-area rectangle
	contained in $C_{\eps/32}$ and return it. This finishes the
	description of the algorithm.
	
	Because of Lemma~\ref{le:large}, $C_{\eps/32}$ contains
	a rectangle of area at least $(1-\eps)\cdot \area(\Ropt)$,
	where $\Ropt$ is a maximum-area rectangle contained in $C$.
	Therefore, the algorithm returns a $(1-\eps)$-approximation
	to the maximum-area rectangle.

	Computing $C_{\eps/32}$ takes time 
	$O((\eps/32)^{-1/2} T_C)=O(\eps^{-1/2} T_C)$ because
	of Lemma~\ref{le:kernel}. Since $C_{\eps/32}$
	has $O((\eps/32)^{-1/2})=O(\eps^{-1/2})$ vertices, finding a
	largest rectangle contained in $C_{\eps/32}$
	takes time $O(\eps^{-3/2})$ because of Theorem~\ref{th:exact}.
\end{proof}

\begin{corollary}
	Let $C$ be a convex polygon with $n$ vertices given 
	as a sorted array or a balanced binary search tree.
	For any given $\eps\in (0,1)$, we can find a $(1-\eps)$-approximation
	to the maximum-area rectangle contained in $C$ 
	in time $O(\eps^{-1/2} \log n +\eps^{-3/2})$.
\end{corollary}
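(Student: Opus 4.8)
\section*{Proof proposal for the Corollary}

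The plan is to simply instantiate the preceding theorem with the appropriate value of the primitive-operation cost $T_C$. The theorem states that for any convex body $C$ in the plane we can compute a $(1-\eps)$-approximation to the maximum-area rectangle contained in $C$ in time $O(\eps^{-1/2} T_C + \eps^{-3/2})$, where $T_C$ bounds the time for the two primitive operations used by the $\eps$-kernel construction of Ahn et al.\ (finding an extremal point of $C$ in a query direction, and intersecting $C$ with a query line).

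First I would recall, as already noted in the Preliminaries, that when $C$ is a convex $n$-gon whose boundary is stored as a sorted array of vertices or as a balanced binary search tree, both primitive operations can be answered in $O(\log n)$ time: an extremal point in a given direction is found by binary search over the (cyclically) sorted vertex sequence, and the intersection of $C$ with a line is found by a standard logarithmic-time chord-finding procedure on the same structure~\cite{Chazelle-Dobkin,Reichling}. Hence $T_C = O(\log n)$ in this setting.

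Substituting $T_C = O(\log n)$ into the bound $O(\eps^{-1/2} T_C + \eps^{-3/2})$ of the theorem yields the claimed running time $O(\eps^{-1/2}\log n + \eps^{-3/2})$, and the approximation guarantee is inherited verbatim from the theorem. There is essentially no obstacle here; the only point that deserves a sentence of care is confirming that the $\eps$-kernel algorithm of Lemma~\ref{le:kernel} accesses $C$ exclusively through these two primitives, so that the $O(\log n)$ bound per primitive genuinely suffices — but this is exactly the interface made explicit in the Preliminaries, so the corollary follows immediately.
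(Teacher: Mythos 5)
Your proposal is correct and matches the paper's proof exactly: the paper likewise just observes that $T_C = O(\log n)$ for a convex $n$-gon stored as a sorted array or balanced BST and substitutes into the theorem's bound. Your extra sentences spelling out why the primitives cost $O(\log n)$ and that the kernel algorithm touches $C$ only through those primitives are accurate and harmless elaboration.
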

\begin{proof}
	In this case $T_C=O(\log n)$.
\end{proof}

\section{Approximation algorithm to maximize the perimeter}
\label{sec:approximation2}

The approximation algorithm is the following: we compute an
$(\eps/16)$-kernel $C_{\eps/16}$ for the input convex body $C$,
compute a maximum-perimeter rectangle $R_{\eps/16}$ contained in
$C_{\eps/16}$, and return it.  We next show that this indeed computes
a $(1-\eps)$-approximation.

Since the algorithm is independent of the coordinate axes, 
we can assume that the maximum-perimeter rectangle contained in~$C$ 
is an axis-parallel rectangle $\Ropt = [-a,a] \times [-b,b]$ with $b \leq a$.  
We distinguish two cases depending
on the aspect ratio $b/a \le 1$ of $\Ropt$.  When $b/a \le \eps/2$, then
the longest segment contained in $C$ is a good approximation
to~$\Ropt$.  When $b/a > \eps/2$, then $\Ropt$ is fat enough that we can
use Lemma~\ref{le:square} to obtain a large-perimeter rectangle.  
\begin{lemma}
  \label{le:bound3}
  We have
  $\peri(R_{\eps/16})\ge (1-\eps)\cdot \peri(\Ropt)$.
\end{lemma}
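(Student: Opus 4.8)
The plan is to follow the case split announced just before the statement: let $\Ropt=[-a,a]\times[-b,b]$ with $0\le b\le a$ be the axis-parallel maximum-perimeter rectangle in $C$, and distinguish $b/a\le\eps/2$ from $b/a>\eps/2$. In each case I would exhibit an explicit rectangle --- possibly a degenerate one, i.e.\ a segment --- contained in $C_{\eps/16}$, bound $\peri(R_{\eps/16})$ from below by the perimeter of that witness (since $R_{\eps/16}$ is perimeter-maximal in $C_{\eps/16}$), and then compare with $\peri(\Ropt)=4a+4b$ through an elementary inequality in $\eps$.

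First I would record the facts common to both cases. Write $d$ for the diameter of $C$, i.e.\ $d=\max_{u\in\U}\dwidth(u,C)$. Since $\Ropt\subseteq C$ contains a horizontal segment of length $2a$, we have $d\ge 2a$. Since a segment of length $d$ in $C$ is a degenerate rectangle of perimeter $2d$ and $\Ropt$ maximises the perimeter, we also get $4a+4b=\peri(\Ropt)\ge 2d$, hence $d\le 2(a+b)$. Finally, as $C_{\eps/16}$ is an $(\eps/16)$-kernel, $\dwidth(u,C_{\eps/16})\ge(1-\eps/16)\dwidth(u,C)$ for every $u\in\U$; maximising over $u$ shows that the diameter of $C_{\eps/16}$ is at least $(1-\eps/16)\,d\ge 2a(1-\eps/16)$.

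In the case $b/a\le\eps/2$, the longest segment in $C_{\eps/16}$ realises the diameter of $C_{\eps/16}$ and is an admissible degenerate solution, so $\peri(R_{\eps/16})\ge 2\cdot 2a(1-\eps/16)=4a(1-\eps/16)$, whereas $\peri(\Ropt)=4a+4b\le 4a(1+\eps/2)$. It then suffices to check $1-\eps/16\ge(1-\eps)(1+\eps/2)=1-\tfrac{\eps}{2}-\tfrac{\eps^2}{2}$, which holds for all $\eps\in(0,1)$.

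In the case $b/a>\eps/2$, I would apply Lemma~\ref{le:square} with kernel parameter $\eps/16$, obtaining that $C_{\eps/16}$ contains the axis-parallel rectangle $S=[-a+d\eps/16,\,a-d\eps/16]\times[-b+d\eps/16,\,b-d\eps/16]$. The first thing to verify is that this conclusion is non-vacuous, i.e.\ $a,b\ge d\eps/16$: from $b\le a$ we get $d\le 4a$, so $d\eps/16\le a\eps/4<a\eps/2<b\le a$. Then $\peri(S)=4a+4b-d\eps/2=\peri(\Ropt)-d\eps/2$, hence $\peri(R_{\eps/16})\ge\peri(\Ropt)-d\eps/2$, and using $d\le 2(a+b)$ we have $d\eps/2\le(a+b)\eps=\tfrac{\eps}{4}\peri(\Ropt)\le\eps\,\peri(\Ropt)$, which gives $\peri(R_{\eps/16})\ge(1-\eps)\peri(\Ropt)$. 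The only step that needs genuine care rather than arithmetic is this non-vacuity check; it is exactly what forces the aspect-ratio threshold $\eps/2$ and explains the choice of kernel accuracy $\eps/16$. Everything else is a routine comparison of two estimates that are linear in $\eps$.
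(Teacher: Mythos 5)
Your proof is correct and follows the same structure as the paper's: the case split on the aspect ratio at threshold $\eps/2$, a longest-segment argument in the thin case, and Lemma~\ref{le:square} in the fat case. The only cosmetic differences are that you derive the diameter bound $d\le 2(a+b)\le 4a$ directly from the optimality of $\Ropt$ among degenerate rectangles rather than via the paper's disk-containment contradiction (which in fact rests on the same observation), and you explicitly verify the non-vacuity of Lemma~\ref{le:square}'s conclusion, which the paper leaves implicit in the case condition $b>(\eps/2)a$.
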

\begin{proof}
  We assume first that $b \le (\eps/2) a$.  This implies $\peri(\Ropt)
  \le 4(1 + \eps/2)a$.  Because $C$ contains~$\Ropt$, the directional
  width of $C$ in the horizontal direction is at least~$2a$.  The
  diameter of $C_{\eps/16}$ is therefore at least~$(1-\eps/16)2a$, and
  so $C_{\eps/16}$ contains a segment of perimeter at least
  $4(1-\eps/16)a$.  The lemma then follows from
  \[
  4(1-\eps/16)a \geq (1-\eps) \cdot 4(1 + \eps/2)a.
  \]
  It remains to consider the case $b > (\eps/2) a$.  Let $D$ be the
  disk of radius~$2a$ centered at the origin.  We have $C \subset D$,
  as otherwise $C$ contains a segment of perimeter strictly larger
  than $4a \geq 2a+2b = \peri(\Ropt)$, contradicting the optimality
  of~$\Ropt$.  It follows that the diameter of $C$ is at most~$4a$.

  By Lemma~\ref{le:square},
  $C_{\eps/16}$ contains the axis-parallel rectangle $S = [-a + t, a-t]
  \times [-b + t, b - t]$, where $t = a\eps/4$.  We have
  \[
  \peri(S) = \peri(\Ropt) - 8t
  = \peri(\Ropt) - 2a\eps 
  > (1-\eps)\peri(\Ropt). \qedhere
  \]
\end{proof}

\begin{theorem}
  Let $C$ be a convex body in the plane.  For any given $\eps\in
  (0,1)$, we can find a $(1-\eps)$-approximation to the
  maximum-perimeter rectangle contained in $C$ in time $O(\eps^{-1/2}
  T_C +\eps^{-3/2})$.
\end{theorem}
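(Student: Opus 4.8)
The plan is to follow the same template as the maximum-area case: first compute an $(\eps/16)$-kernel $C_{\eps/16}$ of the input body $C$ using Lemma~\ref{le:kernel}, then compute a maximum-perimeter rectangle inside $C_{\eps/16}$ using the exact algorithm of Theorem~\ref{th:exact2}, and return it. The correctness is exactly Lemma~\ref{le:bound3}, which already establishes $\peri(R_{\eps/16}) \ge (1-\eps)\cdot\peri(\Ropt)$. So the theorem reduces to assembling the known pieces and bounding the running time.

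For the running time, I would split the work into the two phases. Building the $(\eps/16)$-kernel takes $O((\eps/16)^{-1/2} T_C) = O(\eps^{-1/2} T_C)$ time by Lemma~\ref{le:kernel}, and the resulting kernel $C_{\eps/16}$ is a convex polygon with $O((\eps/16)^{-1/2}) = O(\eps^{-1/2})$ vertices. Applying Theorem~\ref{th:exact2} to this polygon then costs $O((\eps^{-1/2})^3) = O(\eps^{-3/2})$ time. Summing the two bounds gives the claimed $O(\eps^{-1/2} T_C + \eps^{-3/2})$.

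There is essentially no obstacle here, since the hard work was done in Lemma~\ref{le:bound3}; the only point that deserves a sentence of care is that computing the longest segment inside $C_{\eps/16}$ (the degenerate rectangle allowed in the perimeter problem) is subsumed by Theorem~\ref{th:exact2}, which already permits degenerate solutions, so no separate diameter computation on $C$ itself is needed. One could alternatively remark that the diameter of $C_{\eps/16}$ is at least $(1-\eps/16)$ times the diameter of $C$, which is what Lemma~\ref{le:bound3} silently uses in the thin case, but this is already internal to that lemma. I would close the proof by combining the correctness from Lemma~\ref{le:bound3} with the running-time bound just derived, as in the corresponding area theorem, and I expect a matching corollary for convex $n$-gons given as a sorted array or balanced binary search tree (where $T_C = O(\log n)$) giving time $O(\eps^{-1/2}\log n + \eps^{-3/2})$.
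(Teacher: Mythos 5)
Your proposal is correct and matches the paper's proof essentially verbatim: compute an $(\eps/16)$-kernel via Lemma~\ref{le:kernel}, run the exact $O(n^3)$ algorithm of Theorem~\ref{th:exact2} on the resulting $O(\eps^{-1/2})$-vertex polygon, and invoke Lemma~\ref{le:bound3} for correctness, yielding the stated $O(\eps^{-1/2} T_C + \eps^{-3/2})$ bound.
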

\begin{proof}
  First, we compute an $(\eps/16)$-kernel $C_{\eps/16}$ for $C$. 
  We then compute a maximum-perimeter rectangle
  contained in $C_{\eps/16}$ and return it. This finishes the
  description of the algorithm.
  
  By Lemma~\ref{le:bound3} $C_{\eps/16}$ contains a rectangle of
  perimeter at least $(1-\eps)\cdot \peri(\Ropt)$, where $\Ropt$ is a
  maximum-perimeter rectangle contained in $C$.  Therefore, the
  algorithm returns a $(1-\eps)$-approximation to the
  maximum-perimeter rectangle.

  Computing $C_{\eps/16}$ takes time $O(\eps^{-1/2} T_C)$ because of
  Lemma~\ref{le:kernel}.  Finding the maximum-perimeter rectangle
  contained in $C_{\eps/16}$ takes time $O(\eps^{-3/2})$ because of
  Theorem~\ref{th:exact2}.
\end{proof}

\begin{corollary}
  Let $C$ be a convex polygon with $n$ vertices given as a sorted
  array or a balanced binary search tree.  For any given $\eps\in
  (0,1)$, we can find a $(1-\eps)$-approximation to the
  maximum-perimeter rectangle contained in $C$ in time $O(\eps^{-1/2}
  \log n +\eps^{-3/2})$.
\end{corollary}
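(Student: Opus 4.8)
The plan is to invoke the preceding theorem with the abstract cost parameter $T_C$ instantiated for the specific representation of $C$. First I would recall from the Preliminaries that the $\eps$-kernel construction of Ahn et al.\ (Lemma~\ref{le:kernel}), and hence the entire approximation pipeline of the previous theorem, accesses the body $C$ only through two primitive operations: given a direction $u\in\U$, return an extremal point of $C$ in direction $u$; and given a line $\ell$, return $C\cap\ell$. The quantity $T_C$ in the theorem is precisely an upper bound on the cost of one such operation, and nothing in the pipeline touches $C$ in any other way (the exact routine of Theorem~\ref{th:exact2} is run on the kernel, which has only $O(\eps^{-1/2})$ vertices, not on $C$).

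Second, I would observe that when $C$ is a convex $n$-gon whose vertices are stored in a sorted array in boundary order, or in a balanced binary search tree keyed by boundary order, both primitives admit an $O(\log n)$-time solution by binary search on the vertex sequence: an extremal vertex in a prescribed direction is located by the standard unimodal/hill-climbing binary search on a convex polygon, and $C\cap\ell$ is obtained by locating the at most two edges of $C$ crossed by $\ell$, again in $O(\log n)$ time. These are exactly the results of Chazelle--Dobkin and Reichling cited earlier in the paper, so in this setting $T_C=O(\log n)$.

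Finally, I would substitute $T_C=O(\log n)$ into the time bound $O(\eps^{-1/2}T_C+\eps^{-3/2})$ of the theorem, which immediately yields the claimed $O(\eps^{-1/2}\log n+\eps^{-3/2})$; the $(1-\eps)$-approximation guarantee is inherited verbatim from the theorem. I do not anticipate any genuine obstacle: the argument is a direct specialization, mirroring the analogous corollary in the area section, and the only point worth a sentence of care is confirming that all access to $C$ factors through the two primitives above — which is already asserted by the statement of the theorem being cited.

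\begin{proof}
  When $C$ is given as a sorted array of vertices or as a balanced binary search tree, both primitive operations used by the algorithm of the previous theorem---finding an extremal point of $C$ in a given direction, and intersecting $C$ with a given line---can be performed in $O(\log n)$ time by binary search~\cite{Chazelle-Dobkin,Reichling}, so $T_C=O(\log n)$. Substituting this into the bound $O(\eps^{-1/2}T_C+\eps^{-3/2})$ of the theorem gives the claimed running time, and the approximation guarantee is that of the theorem.
\end{proof}
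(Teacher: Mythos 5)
Your proof is correct and takes the same approach as the paper, which simply notes that $T_C=O(\log n)$ in this setting and invokes the preceding theorem. Your elaboration of why the two primitives run in $O(\log n)$ time is accurate and consistent with the references already cited in the Preliminaries.
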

\begin{proof}
  In this case $T_C=O(\log n)$.
\end{proof}

\small 
\bibliographystyle{abbrv}
\bibliography{bibliography-rectangle}
\end{document}